\newtheorem{as}{Assumption}[section]
\newtheorem{df}[as]{Definition}
\newtheorem{prp}[as]{Proposition}
\newtheorem{thm}[as]{Theorem}
\newtheorem{rem}[as]{Remark}
\numberwithin{equation}{section}
\numberwithin{figure}{section}
\numberwithin{table}{section}
\title{Optimal Investment with Stochastic Interest Rates and Ambiguity}
\author{Julian H\"olzermann\footnote{University of Southern Denmark and Danish Finance Institute. Email: juho@sam.sdu.dk. The author thanks Steve Bell, Christian Riis Flor, Nikolaus Schweizer, Hoi Ying Wong, and the participants of the 11th General AMaMeF Conference and the Economic Theory Lunch Seminar at Bielefeld University for comments and suggestions.}}
\begin{document}

\maketitle

\begin{abstract}
\noindent This paper studies dynamic asset allocation with interest rate risk and several sources of ambiguity. The market consists of a risk-free asset, a zero-coupon bond (both determined by a Vasicek model), and a stock. There is ambiguity about the risk premia, the volatilities, and the correlation. The investor's preferences display both risk aversion and ambiguity aversion. The optimal investment problem admits a closed-form solution. The solution shows that the ambiguity only affects the speculative motives of the investor, representing a hedge against the ambiguity, but not the hedging of interest rate risk. An implementation of the optimal investment strategy shows that ambiguity aversion helps to tame the highly leveraged portfolios neglecting ambiguity and leads to strategies that are more in line with popular investment advice.
\end{abstract}

\noindent\textbf{Keywords:} Portfolio Selection, Robust Investment, Bond-Stock Ratio, Fixed Income Management, Knightian Uncertainty, Model Uncertainty
\\\textbf{JEL Classification:} G11, G12
\\\textbf{MSC2020:} 91G10, 91G30

\section{Introduction}
How to optimally allocate wealth between bonds, stocks, and cash when there is ambiguity? Without ambiguity, this is an important example of an asset allocation problem, which is typically addressed by dynamic asset allocation models with interest rate risk. Such models show how rational investors can deal with the risk induced by (long-term) bonds and stocks as well as the risk related to the short-term interest rate. However, the models do not show how to deal with ambiguity. Ambiguity refers to the uncertainty that cannot be described by probabilities and is surely present in financial markets, since model parameters are far from perfectly known and often hard to estimate. Therefore, it is natural to treat the parameters of the asset allocation model as ambiguous and study how ambiguity affects the optimal portfolio choice between bonds, stocks, and cash.
\par This is captured by an optimal investment problem with several sources of ambiguity. The starting point is a setting purely based on risk, similar to (classical) dynamic asset allocation models with interest rate risk. The market offers a locally risk-free asset with a risky short-term interest rate. In addition, there is a zero-coupon bond with deterministic risk premium and volatility. The stock market is represented by a single risky asset with constant risk premium, volatility, and correlation (between the stock and the short-term interest rate as well as the bond). The investor chooses how much wealth to invest in the different assets, which can be continuously rebalanced over time. Since there is ambiguity, the investor does not only consider a single scenario for the risk premia, the volatilities, and the correlation but a collection of possible scenarios without any assumptions on which is more likely to be the correct one. Each scenario leads to a different evolution of the investor's wealth and the short-term interest rate. As it is common in the literature, the ambiguity is represented by a set of priors \citep{epsteinji2013,epsteinji2014}, where each prior represents a specific scenario for the risk premia, the volatilities, and the correlation. The investor's preferences display both risk aversion and ambiguity aversion, represented by a constant relative risk aversion utility function and maxmin expected utility in the spirit of \citet{gilboaschmeidler1989}, respectively. Thus, the investor aims at maximizing expected utility under the worst possible prior.
\par The solution to the optimal investment problem shows how the ambiguity affects the optimal investment strategy of the investor. The optimal investment problem can be solved in closed form for the optimal investment strategy, the worst-case prior, and the value function related to the problem, constituting the main result of the paper. Since the optimal investment strategy can be decomposed into a speculative part and a part that hedges interest rate risk (similar to the case without ambiguity), the solution shows that only the speculative part is altered by the ambiguity, while the hedge part is not. The change in the speculative part of the optimal investment strategy compared to the optimal investment strategy neglecting ambiguity can be interpreted as a hedge against the ambiguity. In particular, the ambiguity about the correlation and the worst-case Sharpe ratios of the bond and the stock determine whether the investor invests in the bond or the stock for speculative reasons, while the investor always invests in the bond to hedge interest rate risk. In contrast to the case without ambiguity, the ambiguous risk premium on the bond, which is endogenously determined by the ambiguous volatility of the short-term interest rate, leads to an additional dynamic effect in the optimal investment strategy.
\par An implementation of the optimal investment strategy shows the quantitative effects of ambiguity aversion. The calibration is similar to the case without ambiguity but estimates the extreme values for the ambiguous quantities by the extreme values of rolling window estimates, which is reasonable in a dynamic setting since the ambiguous quantities are allowed to vary over time, but there are in general also other ways to estimate parameters for the ambiguous quantities, e.g., by confidence intervals, which is suitable for a one-period framework, where the ambiguous quantities do not vary over time \citep{danglweissensteiner2020,garlappietal2007}. The estimates yield the optimal investment strategy for an ambiguity averse investor and the optimal investment strategy for an investor neglecting ambiguity. Comparing both strategies shows that ambiguity aversion leads to more reasonable and less leveraged strategies, which still display the desirable properties from the case without ambiguity. By looking at the different sources of ambiguity separately, one can see how each source influences the optimal investment strategy. Moreover, the implementation shows that the optimal investment strategy with ambiguity aversion has a much more natural behavior over time, which is more in line with typical investment advice in contrast to the optimal investment strategy neglecting ambiguity. The reason for the change in the dynamic behavior is the additional dynamic effect in the optimal investment strategy, which is due to the ambiguous volatility of the short-term interest rate.
\par The approach to solving the optimal investment problem is based on an extension of the martingale optimality principle. The martingale optimality principle provides a very simple method for solving dynamic asset allocation problems in closed-form \citep{korn2003,rogers2013}, although in applications, it is very similar to the classical dynamic programming approach to stochastic control problems. But more importantly, it is straightforward to extend the martingale optimality principle to a setting with multiple priors, that is, settings incorporating ambiguity \citep{biaginipinar2017,linriedel2021}. Since the focus of the present paper is on obtaining a closed-form solution to an optimal investment problem in the presence of ambiguity, the martingale optimality principle is tailor-made for solving the problem.
\par Allocating wealth between bonds, stocks, and cash plays an important role in the (dynamic) asset allocation literature. The reason is that according to typical investment advice, more risk averse investors should invest more in bonds compared to stocks, which seems intuitive but contradicts classical fund separation results, as pointed out by \citet{canneretal1997}. A solution to this problem is to consider dynamic asset allocation models with interest rate risk, as it is done by \citet{bajeux-besnainouetal2001,bajeux-besnainouetal2003}, \citet{brennanxia2000}, \citet{sorensen1999}, and \citet{wachter2003} in a continuous-time framework and by \citet{campbellviceira2001} in a discrete-time framework. Due to the importance of the results, there are many papers on this problem in the asset allocation literature: \citet{liouiponcet2001} and \citet{munksorensen2004} extend the results to more general settings, \citet{kornkraft2001,kornkraft2004} provide a detailed analysis of some technical issues related to the problem, and \citet{brangeretal2023} study collective investment in bonds, stocks, and cash (to name a few).
\par In addition to risk, there is a stream of literature on asset allocation focusing on ambiguity, which the present paper complements. \citet{garlappietal2007} introduce ambiguity to the classical mean-variance analysis, and \citet{danglweissensteiner2020} follow a similar approach to consider the allocation between bonds, stocks, and cash (and other asset classes). Instead of a one-period setting, one can also introduce ambiguity to continuous-time portfolio selection problems, as it is done by \citet{brangeretal2013} and \citet{maenhout2004,maenhout2006} using a maxmin approach and by \citet{balteretal2021} using the smooth ambiguity approach. Similar to the present paper, \citet{florlarsen2014} study optimal investment in bonds, stocks, and cash with ambiguity. However, as the previous studies on ambiguity in continuous time, they construct the set of priors using equivalent probability measures, which restricts the ambiguity to the drift of the underlying processes, that is, ambiguous risk premia. In addition to ambiguous risk premia, it is important to consider ambiguous volatility. \citet{epsteinji2013,epsteinji2014} offer a detailed discussion about the relevance of ambiguous volatility and study asset pricing. \citet{kostopoulosetal2022} provide empirical support for the effect of ambiguous volatility on investor behavior. \citet{biaginipinar2017} and \citet{linriedel2021} study dynamic asset allocation problems with several sources of ambiguity--including ambiguous volatility. In addition to ambiguous volatility, \citet{epsteinhalevy2019} highlight the role of ambiguous correlation and provide experimental evidence. \citet{fouqueetal2016} and \citet{linetal2022} examine asset allocation with ambiguous correlation. However, none of the studies with additional sources of ambiguity (in addition to ambiguous risk premia) focuses on the allocation between bonds, stocks, and cash. Therefore, the motivation of the present paper is to investigate the asset allocation problem between bonds, stocks, and cash with ambiguous risk premia, ambiguous volatilities, and ambiguous correlation.
\par The paper is organized as follows. Section \ref{investment problem} introduces the investment problem the investor faces. Section \ref{optimal investment} states and discusses the main result: the optimal investment strategy. Section \ref{implementation} implements the optimal investment strategy and offers quantitative results. Section \ref{conclusion} gives a conclusion. The proof of the main result and some details related to the calibration of the model are deferred to Sections \ref{proof} and \ref{calibration negelcting ambiguity} in the appendix, respectively.

\section{Investment Problem}\label{investment problem}
All risk in the model is represented by a two-dimensional standard Brownian motion. Let $(\Omega,\mathcal{F},P_0)$ be a probability space such that $\Omega=C_0(\mathbb{R}_+,\mathbb{R}^2)$, the space of all $\mathbb{R}^2$-valued continuous functions on $\mathbb{R}_+$ starting at $0$, and $P_0$ is the Wiener measure. Then the canonical process, $B=(B_t^1,B_t^2)_t^\top$, is a two-dimensional standard Brownian motion under $P_0$. Moreover, let $\mathbb{F}=(\mathcal{F}_t)_t$ be the filtration generated by $B$, representing the information available to the investor as time evolves. The process $B$ consists of the risk factors that drive the dynamics of the (risky) assets in the market.
\par The market offers a risk-free asset, determined by the short-term interest rate. The (locally) risk-free asset satisfies the price dynamics
\begin{align*}
dP_t^0=P_t^0r_tdt,
\end{align*}
where the process $r=(r_t)_t$ denotes the short-term interest rate. The short-term interest rate is stochastic and evolves as in the (classical) Vasicek model:
\begin{align*}
dr_t=\kappa(\bar{r}-r_t)dt+\sigma_rdB_t^1.
\end{align*}
Thus, the short-term interest rate is a mean-reverting process with a constant volatility. The parameters $\kappa,\bar{r},\sigma_r\in\mathbb{R}$, where $\sigma_r>0$, represent the mean reversion speed and level and the volatility, respectively.
\par In addition to the risk-free asset, the investor can invest in a zero-coupon bond, which is risky. According to the Vasicek model, a zero-coupon bond with maturity $\bar{T}$ is priced such that its price evolves as
\begin{align*}
dP_t^B=P_t^B\Big(\big(r_t+b(\bar{T}-t)\sigma_r\lambda_B\big)dt-b(\bar{T}-t)\sigma_rdB_t^1\Big).
\end{align*}
The function $b:\mathbb{R}\rightarrow\mathbb{R}$, which is defined by
\begin{align*}
b(\bar{T}-t):=\frac{1}{\kappa}\Big(1-e^{-\kappa(\bar{T}-t)}\Big),
\end{align*}
determines the volatility of the bond in terms of the volatility of the short-term interest rate. It is an increasing function; thus, long-term bonds are riskier than short-term bonds. Since the dynamics of the bond are only driven by the first risk factor, the bond is perfectly negatively correlated with the short-term interest rate. The Sharpe ratio of the bond is represented by $\lambda_B\in\mathbb{R}$, which is assumed to be constant in the Vasicek model. This assumption makes the model tractable but is (of course) very unrealistic. Introducing ambiguity to the model in the following helps to overcome this trade-off.
\par Moreover, the investor can invest in the stock market. The stock market is represented by a single risky asset with price dynamics
\begin{align*}
dP_t^S=P_t^S\Big(\big(r_t+\lambda_S\big)dt+\sigma_S\rho dB_t^1+\sigma_S\sqrt{1-\rho^2}dB_t^2\Big).
\end{align*}
The stock is driven by both risk factors--so the stock price is driven by some additional noise, while the first noise term allows for some correlation between the short-term interest rate and the stock. The correlation and the volatility are represented by the parameters $\rho\in(-1,1)$ and $\sigma_S\in\mathbb{R}$, respectively, where $\sigma_S>0$. In contrast to the bond, $\lambda_S\in\mathbb{R}$ represents the expected excess return over the risk-free asset, i.e., the risk premium on the stock. It is also assumed to be constant (as the Sharpe ratio of the bond), which is (again) unrealistic but needed for tractability reasons. \citet{kornkraft2004} offer a critical discussion in this regard. As mentioned above, the succeeding framework with ambiguity offers a good alternative since it is more realistic but still tractable.
\par The investor participates in the market by choosing a dynamic investment strategy, which can be continuously rebalanced over time. An investment strategy is an $\mathbb{F}$-adapted process $\pi=(\pi_t^B,\pi_t^S)_t^\top$, where $\pi_t^B$ and $\pi_t^S$ are the fractions of wealth invested in the bond and the stock, respectively, at time $t$. The remaining wealth is invested in the risk-free asset. For an investment strategy $\pi$, the investor's wealth, given by the process $W^\pi=(W_t^\pi)_t$, evolves as
\begin{align*}
dW_t^\pi=W_t^\pi\Big(\big(r_t+\pi_t^\top\lambda(t)\big)dt+\pi_t^\top\sigma(t)dB_t\Big),
\end{align*}
starting from the investor's initial wealth, denoted by $W_0$, where
\begin{align*}
\lambda(t):=\begin{pmatrix}b(\bar{T}-t)\sigma_r\lambda_B\\\lambda_S\end{pmatrix},\quad\sigma(t):=\begin{pmatrix}-b(\bar{T}-t)\sigma_r&0\\\sigma_S\rho&\sigma_S\sqrt{1-\rho^2}\end{pmatrix}.
\end{align*}
It should be noted that the dynamics of the wealth process ensure that the investment strategy is self-financing. The investor is restricted to choose a sufficiently regular investment strategy, ensuring in turn that the wealth process is sufficiently regular. This is met if the investment strategy $\pi$ is bounded. Such strategies are referred to as admissible investment strategies. The restriction to bounded investment strategies can be relaxed by imposing suitable integrability conditions, but this leads to some technical difficulties when solving the optimal investment problem, which is however still feasible. It turns out that the optimal investment strategy is bounded in any case--thus, the restriction simplifies the exposition but does not restrict the results.
\par Since there is ambiguity, the investor considers several scenarios for the risk premia, the volatilities, and the correlation of the available assets instead of a fixed parameter combination. A fixed parameter combination is given by $(\lambda_B,\lambda_S,\sigma_r,\sigma_S,\rho)$ in the setting above. The space of possible scenarios is denoted by $\Theta$, which consists of all $\mathbb{F}$-adapted processes $\theta=(\lambda_t^B,\lambda_t^S,\sigma_t^r,\sigma_t^S,\rho_t)_t$ satisfying for all $t$
\begin{gather*}
\underline{\lambda}_B(t)\leq\lambda_t^B\leq\overline{\lambda}_B(t),\quad\underline{\lambda}_S\leq\lambda_t^S\leq\overline{\lambda}_S,\quad\underline{\sigma}_r\leq\sigma_t^r\leq\overline{\sigma}_r,\quad\underline{\sigma}_S\leq\sigma_t^S\leq\overline{\sigma}_S,\quad\underline{\rho}\leq\rho_t\leq\overline{\rho},
\end{gather*}
where $\underline{\lambda}_B(t),\underline{\lambda}_S,\underline{\sigma}_r,\underline{\sigma}_S>0$ and $-1<\underline{\rho}\leq0\leq\overline{\rho}<1$. Hence, the investor considers a collection of possible parameter combinations including time varying risk premia, volatilities, and correlations. The only restriction is that their evolution is bounded by some extreme values. The extreme values for the risk premium on the stock, both volatilities, and the correlation are given exogenously. The extreme values for the risk premium on the bond are determined endogenously by the volatility of the short-term interest rate. The reason is that an arbitrage-free term structure in the presence of ambiguous volatility requires an ambiguous process in the risk premium of a zero-coupon bond \citep{holzermann2021,holzermann2022}. The same can be deduced for fixed coupon bonds and other interest rate derivatives \citep{holzermann2022'}. In a Vasicek model with ambiguous volatility, the risk premium on a zero-coupon bond with maturity $\bar{T}$ is $b(\bar{T}-t)\lambda_t^B$, where the ambiguity in the risk premium, represented by $\lambda_t^B$, is bounded by
\begin{align*}
\underline{\lambda}_B(t)&=e^{-2\kappa t}\lambda_0^B+\frac{\underline{\sigma}_r^2}{2\kappa}\Big(1-e^{-2\kappa t}\Big),
\\\overline{\lambda}_B(t)&=e^{-2\kappa t}\lambda_0^B+\frac{\overline{\sigma}_r^2}{2\kappa}\Big(1-e^{-2\kappa t}\Big)
\end{align*}
\citep[Section 6]{holzermann2021}. The extreme values for the ambiguous risk premium on the bond are essentially determined by the extreme values for the ambiguous variance of the short-term interest rate \citep[Theorem 2.1]{holzermann2021}. Compared to the original results, $\lambda_0^B$ is nonzero, since otherwise, the initial value of the risk premium on the bond is zero. This still yields an arbitrage-free description of the bond price, since the dynamics of the ambiguous part in the risk premium on the bond are essential for obtaining an arbitrage-free term structure \citep[Proof of Theorem 6.1]{holzermann2021}.
\par Consequently, the investor's wealth and the short-term interest rate have a different evolution in each possible scenario. For an investment strategy $\pi$ and a scenario $\theta\in\Theta$, the investor's wealth and the short-term interest rate, now denoted by $W^{\pi,\theta}=(W_t^{\pi,\theta})_t$ and $r^\theta=(r_t^\theta)_t$, respectively, evolve as
\begin{align*}
dW_t^{\pi,\theta}&=W_t^{\pi,\theta}\Big(\big(r_t^\theta+\pi_t^\top\lambda_t^\theta\big)dt+\pi_t^\top\sigma_t^\theta dB_t\Big),
\\dr_t^\theta&=\kappa(\bar{r}-r_t)dt+\nu_t^\theta dB_t,
\end{align*}
starting from the investor's initial wealth and the initial short-term interest rate, denoted by $W_0$ and $r_0$, respectively, where
\begin{align*}
\lambda_t^\theta:=\begin{pmatrix}b(\bar{T}-t)\lambda_t^B\\\lambda_t^S\end{pmatrix},\quad\sigma_t^\theta:=\begin{pmatrix}-b(\bar{T}-t)\sigma_t^r&0\\\sigma_t^S\rho_t&\sigma_t^S\sqrt{1-\rho_t^2}\end{pmatrix},\quad\nu_t^\theta:=\begin{pmatrix}\sigma_t^r\\0\end{pmatrix}.
\end{align*}
The investor is still restricted to choose a sufficiently regular investment strategy. The space of admissible investment strategies is denoted by $\Pi$, which consists of all $\mathbb{F}$-adapted processes $\pi$ that are bounded, ensuring that the wealth process is sufficiently regular for all $\theta\in\Theta$, that is, in each possible scenario.
\par The ambiguity is represented by a set of priors, where each prior represents a different scenario. Similar to \citet{epsteinji2013,epsteinji2014}, one can construct a prior related to each scenario for the evolution of the wealth process and the short-term interest rate. For each investment strategy $\pi\in\Pi$ and each possible scenario $\theta\in\Theta$, define the prior
\begin{align*}
P_{\pi,\theta}:=P_0\circ(W^{\pi,\theta},r^\theta)^{-1},
\end{align*}
i.e., the probability measure induced by the processes $W^{\pi,\theta}$ and $r^\theta$. Then the set of priors, related to the investment strategy $\pi\in\Pi$, is defined by
\begin{align*}
\mathcal{P}_\pi:=\{P_{\pi,\theta}\,\vert\,\theta\in\Theta\}.
\end{align*}
Hence, the canonical process, now denoted by $(W,r)=(W_t,r_t)_t$, which represents the wealth and the short-term interest rate, evolves according to a different scenario under each prior.
\par The investor has preferences about terminal wealth from investing, displaying both risk and ambiguity aversion. As in similar studies, the investor cares about terminal wealth. The preferences are represented by expected utility with a constant relative risk aversion (CRRA) utility function, representing the investor's risk aversion. The attitude towards ambiguity is incorporated by maxmin expected utility in the spirit of \citet{gilboaschmeidler1989}: the investor ranks investment strategies according to their expected utility from terminal wealth under the worst-case scenario among the relevant priors. Thus, in order to rank an admissible investment strategy $\pi\in\Pi$, the investor considers
\begin{align*}
\inf_{P\in\mathcal{P}_\pi}\mathbb{E}_P[u(W_T)],
\end{align*}
where $T<\infty$ represents the investor's time horizon and $u$ is a CRRA utility function: $u(W):=\frac{W^{1-\gamma}}{1-\gamma}$ for a relative risk aversion parameter $\gamma>1$. The case $\gamma=1$ corresponds to logarithmic utility: $u(W)=\ln(W)$.
\par Then the aim of the investor is to choose an investment strategy maximizing the preferences, that is, an optimal investment strategy in the following sense.
\begin{df}\label{optimal investment strategy}
An admissible investment strategy $\hat{\pi}\in\Pi$ is called optimal investment strategy if it holds
\begin{align*}
\sup_{\pi\in\Pi}\inf_{P\in\mathcal{P}_\pi}\mathbb{E}_P[u(W_T)]=\inf_{P\in\mathcal{P}_{\hat{\pi}}}\mathbb{E}_P[u(W_T)].
\end{align*}
\end{df}
\noindent Apart from the optimal investment strategy, it is also interesting to determine the worst-case prior related to the optimal investment strategy of the investor.
\begin{df}\label{worst-case prior}
A prior $\hat{P}\in\mathcal{P}_{\hat{\pi}}$ is called worst-case prior for an admissible investment strategy $\hat{\pi}\in\Pi$ if it holds
\begin{align*}
\inf_{P\in\mathcal{P}_{\hat{\pi}}}\mathbb{E}_P[u(W_T)]=\mathbb{E}_{\hat{P}}[u(W_T)].
\end{align*}
\end{df}

\section{Optimal Investment}\label{optimal investment}
The optimal investment problem admits a closed-form solution for the optimal investment strategy, the worst-case prior, and the related value function.
\begin{thm}\label{solution to the optimal investment problem}
The optimal investment strategy is given by $\hat{\pi}=(\hat{\pi}_t^B,\hat{\pi}_t^S)_t^\top$, where
\begin{align*}
\hat{\pi}_t^B&:=\frac{1}{\gamma}\frac{1}{1-\hat{\rho}(t)^2}\frac{1}{b(\bar{T}-t)\hat{\sigma}_r}\biggl(\frac{\hat{\lambda}_B(t)}{\hat{\sigma}_r}+\hat{\rho}(t)\frac{\hat{\lambda}_S}{\hat{\sigma}_S}\biggr)+\frac{\gamma-1}{\gamma}\frac{b(T-t)}{b(\bar{T}-t)},
\\\hat{\pi}_t^S&:=\frac{1}{\gamma}\frac{1}{1-\hat{\rho}(t)^2}\frac{1}{\hat{\sigma}_S}\biggl(\frac{\hat{\lambda}_S}{\hat{\sigma}_S}+\hat{\rho}(t)\frac{\hat{\lambda}_B(t)}{\hat{\sigma}_r}\biggr),
\end{align*}
and the worst-case prior for $\hat{\pi}$ is given by $P_{\hat{\pi},\hat{\theta}}$, where $\hat{\theta}=(\hat{\lambda}_B(t),\hat{\lambda}_S,\hat{\sigma}_r,\hat{\sigma}_S,\hat{\rho}(t))_t$ and
\begin{gather*}
\hat{\lambda}_B(t):=\underline{\lambda}_B(t),\quad\hat{\lambda}_S:=\underline{\lambda}_S,\quad\hat{\sigma}_r:=\overline{\sigma}_r,\quad\hat{\sigma}_S:=\overline{\sigma}_S,
\\\hat{\rho}(t):=\max\biggl\{\underline{\rho},-\frac{\underline{\lambda}_B(t)/\overline{\sigma}_r}{\underline{\lambda}_S/\overline{\sigma}_S},-\frac{\underline{\lambda}_S/\overline{\sigma}_S}{\underline{\lambda}_B(t)/\overline{\sigma}_r}\biggr\}.
\end{gather*}
The value of the optimal investment problem is given by
\begin{align*}
\sup_{\pi\in\Pi}\inf_{P\in\mathcal{P}_\pi}\mathbb{E}_P[u(W_T)]=V(0,W_0,r_0),
\end{align*}
where the value function $V:[0,T]\times\mathbb{R}_+\times\mathbb{R}\rightarrow\mathbb{R}$ is defined by
\begin{align*}
V(t,W,r):=\exp\Big((1-\gamma)\big(a_0(t)+a_1(t)r\big)\Big)\frac{W^{1-\gamma}}{1-\gamma}
\end{align*}
and the functions $a_0,a_1:[0,T]\rightarrow\mathbb{R}$ are defined by
\begin{align*}
a_0(t):={}&\frac{1}{2}\frac{1}{\gamma}\int_t^T\frac{1}{1-\hat{\rho}(u)^2}\biggl(\frac{\hat{\lambda}_B(u)^2}{\hat{\sigma}_r^2}+2\hat{\rho}(u)\frac{\hat{\lambda}_B(u)\hat{\lambda}_S}{\hat{\sigma}_r\hat{\sigma}_S}+\frac{\hat{\lambda}_S^2}{\hat{\sigma}_S^2}\biggr)du+\kappa\bar{r}\int_t^Tb(T-u)du
\\&+\frac{\gamma-1}{\gamma}\int_t^T\hat{\lambda}_B(u)b(T-u)du-\frac{1}{2}\frac{\gamma-1}{\gamma}\hat{\sigma}_r^2\int_t^Tb(T-u)^2du
\end{align*}
and $a_1(t):=b(T-t)$, respectively.
\end{thm}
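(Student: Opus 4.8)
The plan is to solve the problem through the robust version of the martingale optimality principle. To each pair $(\pi,\theta)$ I would associate the process $G_t^{\pi,\theta}:=V(t,W_t^{\pi,\theta},r_t^\theta)$ built from the candidate value function. Since $a_0(T)=0$ and $a_1(T)=b(0)=0$, the terminal condition $G_T^{\pi,\theta}=u(W_T)$ holds, while $G_0^{\pi,\theta}=V(0,W_0,r_0)$ is independent of $(\pi,\theta)$. The argument then reduces to three drift conditions: (i) for the candidate $\hat{\pi}$, $G^{\hat{\pi},\theta}$ is a $P_{\hat{\pi},\theta}$-submartingale for every $\theta\in\Theta$; (ii) for the candidate $\hat{\theta}$, $G^{\pi,\hat{\theta}}$ is a $P_{\pi,\hat{\theta}}$-supermartingale for every $\pi\in\Pi$; and (iii) $G^{\hat{\pi},\hat{\theta}}$ is a $P_{\hat{\pi},\hat{\theta}}$-martingale. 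Given these, (i) and (iii) yield $\inf_{P\in\mathcal{P}_{\hat{\pi}}}\mathbb{E}_P[u(W_T)]=V(0,W_0,r_0)$ with the infimum attained at $\hat{\theta}$ (identifying the worst-case prior $P_{\hat{\pi},\hat{\theta}}$), whereas (ii) gives $\inf_{P\in\mathcal{P}_\pi}\mathbb{E}_P[u(W_T)]\le V(0,W_0,r_0)$ for every $\pi$; combining the two proves the claimed value and the optimality of $\hat{\pi}$.

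Next I would reduce the drift conditions to sign conditions. Because $V$ is the product of $\exp((1-\gamma)(a_0(t)+a_1(t)r))$ and $W^{1-\gamma}/(1-\gamma)$, every partial derivative is a scalar multiple of $V$, so It\^o's formula gives $dG_t^{\pi,\theta}=V(1-\gamma)\Psi_t^{\pi,\theta}\,dt+(\text{local martingale})$ for an explicit function $\Psi^{\pi,\theta}$. As $\gamma>1$ and $W>0$, the prefactor $V(1-\gamma)$ is positive, so (i)--(iii) are equivalent to sign conditions on $\Psi$. The choice $a_1(t)=b(T-t)$ is forced by requiring that all $r$-dependent terms in $\Psi$ cancel, which is precisely the linear ODE $a_1'(t)+1-\kappa a_1(t)=0$ with $a_1(T)=0$; after this cancellation $\Psi_t^{\pi,\theta}$ depends only on $(t,\pi,\theta)$, and (i)--(iii) become the pointwise saddle statement $\Psi_t^{\pi,\hat{\theta}}\le 0=\Psi_t^{\hat{\pi},\hat{\theta}}\le\Psi_t^{\hat{\pi},\theta}$.

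The inner inequality $\Psi_t^{\pi,\hat{\theta}}\le 0$ is a concave quadratic maximization in $\pi$ (Hessian $-\gamma\,\sigma^{\hat{\theta}}(\sigma^{\hat{\theta}})^\top\prec 0$); completing the square in the volatility loading $\pi^\top\sigma^\theta$ yields the maximizer $\hat{\pi}=\frac{1}{\gamma}(\Sigma^\theta)^{-1}\lambda^\theta+\frac{1-\gamma}{\gamma}a_1(t)(\Sigma^\theta)^{-1}\sigma^\theta\nu^\theta$ evaluated at $\hat{\theta}$, with $\Sigma^\theta:=\sigma^\theta(\sigma^\theta)^\top$. A direct computation of these two summands reproduces the speculative and hedging parts of the stated formulas: the first gives the $\frac{1}{\gamma}\frac{1}{1-\hat{\rho}^2}$ terms, and the second collapses to $\frac{\gamma-1}{\gamma}\frac{b(T-t)}{b(\bar{T}-t)}$ in the bond and to $0$ in the stock. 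The maximal value is governed by the squared market price of risk $|\eta^\theta|^2=\frac{1}{1-\rho^2}(\theta_B^2+2\rho\theta_B\theta_S+\theta_S^2)$ with $\theta_B:=\lambda^B/\sigma^r$ and $\theta_S:=\lambda^S/\sigma^S$; demanding that the saddle value be $0$ then forces $a_0'$ to equal minus the remaining terms, which integrates with $a_0(T)=0$ to the stated formula for $a_0$.

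The delicate step, which I expect to be the main obstacle, is the outer minimization, i.e.\ condition (i): one must show that the explicit $\hat{\pi}$ makes $\Psi_t^{\hat{\pi},\theta}\ge 0$ over the entire box $\Theta$, with equality at $\hat{\theta}$. The dependence on $(\lambda^B,\lambda^S)$ is linear, so their worst values sit at the lower endpoints $\underline{\lambda}_B,\underline{\lambda}_S$; the volatilities enter only through the Sharpe ratios $\theta_B,\theta_S$, so that raising $\sigma^r,\sigma^S$ worsens the opportunity set and pushes them to $\overline{\sigma}_r,\overline{\sigma}_S$. The genuinely non-convex piece is the correlation: minimizing $\rho\mapsto\frac{\theta_B^2+2\rho\theta_B\theta_S+\theta_S^2}{1-\rho^2}$ over $[\underline{\rho},\overline{\rho}]$ produces an interior critical point at $-\min\{\theta_S/\theta_B,\theta_B/\theta_S\}$ (the second root lies outside $(-1,1)$), whence the constrained minimizer is $\hat{\rho}(t)=\max\{\underline{\rho},-\frac{\underline{\lambda}_B/\overline{\sigma}_r}{\underline{\lambda}_S/\overline{\sigma}_S},-\frac{\underline{\lambda}_S/\overline{\sigma}_S}{\underline{\lambda}_B/\overline{\sigma}_r}\}$. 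Because the parameters are coupled through $|\eta^\theta|^2$ and the parametrization in $(\sigma^r,\sigma^S,\rho)$ is not convex, care is needed to confirm that the candidate obtained from the minimax heuristic is a genuine saddle, so that (i) holds globally rather than only to first order. I would close by discharging the technical point that the local-martingale parts are true martingales, which follows from admissibility (boundedness of $\pi$) together with the Gaussian structure of $r^\theta$ and the finite moments of $W^{\pi,\theta}$, legitimizing the passage to expectations and completing the verification.
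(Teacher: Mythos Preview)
Your proposal is correct and follows essentially the same route as the paper: an extended martingale optimality principle reduced to the three drift conditions \eqref{martingale condition}--\eqref{submartingale condition}, the concave quadratic maximization in $\pi$ under $\hat\theta$ for the supermartingale direction, the choice of $a_0$ to null the saddle value, and a direct verification that $\hat\theta$ minimizes the drift at $\hat\pi$ for the submartingale direction. One small caveat on the last step: in $\Psi_t^{\hat\pi,\theta}$ the volatilities do \emph{not} enter only through the Sharpe ratios (that reduction applies to $\max_\pi\Psi_t^{\pi,\theta}$, not to the drift at a fixed $\hat\pi$); the paper handles the genuine-saddle check you anticipate precisely via the case split $(i)$--$(iii)$ of Section~\ref{optimal investment}, which ensures the linear coefficients of $\lambda^B,\lambda^S$ are nonnegative and that the quadratic in $(\sigma^r,\sigma^S,\rho)$ is minimized at $(\overline\sigma_r,\overline\sigma_S,\hat\rho(t))$.
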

\begin{rem}
The case with logarithmic utility yields the same strategies and worst-case prior but a different value function. One can check (as in Section \ref{proof}) that setting $\gamma=1$ in the expression for $\hat{\pi}$ in Theorem \ref{solution to the optimal investment problem} yields the optimal investment strategy with the same worst-case prior for logarithmic utility. Then the additional term in $\hat{\pi}^B$ vanishes, which represents the fact that investors with logarithmic utility do not hedge interest rate risk (see the discussion below). However, similar to the case without ambiguity, the value function has a different structure when the utility function is of logarithmic type.
\end{rem}
\par The solution to the optimal investment problem shows that the ambiguity affects only the speculative motives of the investor but not the demand for hedging interest rate risk. The optimal investment strategy consists of two parts--a speculative part and a part that hedges interest rate risk--as in the case without ambiguity. The optimal portfolio, $\hat{\pi}$, from Theorem \ref{solution to the optimal investment problem} can be separated into two parts:
\begin{align*}
\hat{\pi}=\frac{1}{\gamma}\hat{\pi}^\text{myopic}+\frac{\gamma-1}{\gamma}\pi^\text{hedge},
\end{align*}
where the investor's level of risk aversion, $\gamma$, determines how wealth is allocated between both parts. The first part, $\hat{\pi}^\text{myopic}=(\hat{\pi}_t^\text{myopic})_t$, is given by
\begin{align*}
\hat{\pi}_t^\text{myopic}:=\frac{1}{1-\hat{\rho}(t)^2}\begin{pmatrix}\frac{1}{b(\bar{T}-t)\hat{\sigma}_r}\Big(\frac{\hat{\lambda}_B(t)}{\hat{\sigma}_r}+\hat{\rho}(t)\frac{\hat{\lambda}_S}{\hat{\sigma}_S}\Big)\\\frac{1}{\hat{\sigma}_S}\Big(\frac{\hat{\lambda}_S}{\hat{\sigma}_S}+\hat{\rho}(t)\frac{\hat{\lambda}_B(t)}{\hat{\sigma}_r}\Big)\end{pmatrix}
\end{align*}
and represents the speculative part, since it corresponds to the classical mean-variance optimal investment strategy, which is typically referred to as the myopic portfolio. The second part, $\pi^\text{hedge}=(\pi_t^\text{hedge})_t$, is given by
\begin{align*}
\pi_t^\text{hedge}:=\begin{pmatrix}\frac{b(T-t)}{b(\bar{T}-t)}\\0\end{pmatrix},
\end{align*}
which is referred to as the hedge part, since it can be related to hedging interest rate risk. As in the classical situation without ambiguity, only the bond is used to hedge interest rate risk. In particular, one can see that the ambiguity, represented by the worst-case scenario, $\hat{\theta}$, only affects the speculative part of the investment, but it has no effect on hedging interest rate risk. This is due to the fact that the source of ambiguity in the volatility of the short-term interest rate and the bond is the same and thus cancels out.
\par The effect of the ambiguity on the speculative part of the investment strategy can also be interpreted as a hedge against the ambiguity. Similar to \citet{florlarsen2014} and \citet{maenhout2006}, one can alternatively represent the optimal portfolio by three parts, where one measures the difference to the optimal portfolio neglecting ambiguity. If there is no ambiguity, that is, if the investor only considers the (constant) scenario $\tilde{\theta}=(\lambda_0^B,\lambda_S,\sigma_r,\sigma_S,\rho)_t$ (similar to the beginning of Section \ref{investment problem}), the optimal portfolio, denoted by $\tilde{\pi}=(\tilde{\pi}_t)_t$, is given by
\begin{align*}
\tilde{\pi}=\frac{1}{\gamma}\tilde{\pi}^\text{myopic}+\frac{\gamma-1}{\gamma}\pi^\text{hedge},
\end{align*}
where in this case, the speculative part, $\tilde{\pi}^\text{myopic}=(\tilde{\pi}_t^\text{myopic})_t$, is instead determined by the scenario $\tilde{\theta}$:
\begin{align*}
\tilde{\pi}_t^\text{myopic}:=\frac{1}{1-\rho^2}\begin{pmatrix}\frac{1}{b(\bar{T}-t)\sigma_r}\Big(\frac{\lambda_0^B}{\sigma_r}+\rho\frac{\lambda_S}{\sigma_S}\Big)\\\frac{1}{\sigma_S}\Big(\frac{\lambda_S}{\sigma_S}+\rho\frac{\lambda_0^B}{\sigma_r}\Big)\end{pmatrix},
\end{align*}
and the hedge part, $\pi^\text{hedge}$, is the same as before. The optimal investment strategy with ambiguity, $\hat{\pi}$, can then be written as
\begin{align*}
\hat{\pi}=\frac{1}{\gamma}\tilde{\pi}^\text{myopic}+\frac{\gamma-1}{\gamma}\pi^\text{hedge}+\frac{1}{\gamma}(\hat{\pi}^\text{myopic}-\tilde{\pi}^\text{myopic}),
\end{align*}
i.e., as the optimal investment strategy neglecting ambiguity, consisting of the two parts $\tilde{\pi}^\text{myopic}$ and $\pi^\text{hedge}$, and an additional part. The additional part, $\hat{\pi}^\text{myopic}-\tilde{\pi}^\text{myopic}$, can be interpreted as a hedge against ambiguity. It is determined by how much the speculative part of the investment strategy is affected by the ambiguity (compared to the reference scenario $\tilde{\theta}$).
\par Whether the investor invests in the bond or the stock for speculative reasons is determined by the worst-case scenario for the correlation. Theorem \ref{solution to the optimal investment problem} shows that the worst-case scenario for the correlation depends on the lowest possible correlation, $\underline{\rho}$, and the worst-case Sharpe ratios of the bond and the stock, $\underline{\lambda}_B(t)/\overline{\sigma}_r$ and $\underline{\lambda}_S/\overline{\sigma}_S$, respectively. There are three possible cases at each time $t$:
\begin{enumerate}
\item[$(i)$] The lowest possible correlation is relatively high and/or the worst-case Sharpe ratios are relatively close to each other, that is,
\begin{align*}
\underline{\rho}>-\frac{\underline{\lambda}_B(t)/\overline{\sigma}_r}{\underline{\lambda}_S/\overline{\sigma}_S},\quad\underline{\rho}>-\frac{\underline{\lambda}_S/\overline{\sigma}_S}{\underline{\lambda}_B(t)/\overline{\sigma}_r}.
\end{align*}
Then the portfolio weights in the speculative part, $\hat{\pi}_t^\text{myopic}$, are both positive, since it holds
\begin{gather*}
\frac{\hat{\lambda}_B(t)}{\hat{\sigma}_r}+\hat{\rho}(t)\frac{\hat{\lambda}_S}{\hat{\sigma}_S}=\frac{\underline{\lambda}_B(t)}{\overline{\sigma}_r}+\underline{\rho}\frac{\underline{\lambda}_S}{\overline{\sigma}_S}>0,
\\\frac{\hat{\lambda}_S}{\hat{\sigma}_S}+\hat{\rho}(t)\frac{\hat{\lambda}_B(t)}{\hat{\sigma}_r}=\frac{\underline{\lambda}_S}{\overline{\sigma}_S}+\underline{\rho}\frac{\underline{\lambda}_B(t)}{\overline{\sigma}_r}>0.
\end{gather*}
Consequently, the investor invests in the bond and in the stock for speculative reasons.
\item[$(ii)$] The lowest possible correlation is relatively low and/or the worst-case Sharpe ratio of the bond is much lower than the worst-case Sharpe ratio of the stock, that is,
\begin{align*}
-\frac{\underline{\lambda}_B(t)/\overline{\sigma}_r}{\underline{\lambda}_S/\overline{\sigma}_S}\geq\underline{\rho}.
\end{align*}
Since $\underline{\rho}>-1$ and consequently $\underline{\lambda}_B(t)/\overline{\sigma}_r<\underline{\lambda}_S/\overline{\sigma}_S$, it then holds
\begin{gather*}
\frac{\hat{\lambda}_B(t)}{\hat{\sigma}_r}+\hat{\rho}(t)\frac{\hat{\lambda}_S}{\hat{\sigma}_S}=\frac{\underline{\lambda}_B(t)}{\overline{\sigma}_r}-\frac{\underline{\lambda}_B(t)/\overline{\sigma}_r}{\underline{\lambda}_S/\overline{\sigma}_S}\frac{\underline{\lambda}_S}{\overline{\sigma}_S}=0,
\\\frac{\hat{\lambda}_S}{\hat{\sigma}_S}+\hat{\rho}(t)\frac{\hat{\lambda}_B(t)}{\hat{\sigma}_r}=\frac{\underline{\lambda}_S}{\overline{\sigma}_S}-\frac{\underline{\lambda}_B(t)/\overline{\sigma}_r}{\underline{\lambda}_S/\overline{\sigma}_S}\frac{\underline{\lambda}_B(t)}{\overline{\sigma}_r}>0.
\end{gather*}
Thus, the investor does not invest in the bond for speculative reasons but only in order to hedge interest rate risk--the investor always hedges interest rate risk, since the hedge part, $\pi^\text{hedge}$, is not affected by the ambiguity.
\item[$(iii)$] The lowest possible correlation is relatively low and/or the worst-case Sharpe ratio of the stock is much lower than the worst-case Sharpe ratio of the bond, that is,
\begin{align*}
-\frac{\underline{\lambda}_S/\overline{\sigma}_S}{\underline{\lambda}_B(t)/\overline{\sigma}_r}\geq\underline{\rho}.
\end{align*}
Then, similar to the previous case, it holds
\begin{gather*}
\frac{\hat{\lambda}_B(t)}{\hat{\sigma}_r}+\hat{\rho}(t)\frac{\hat{\lambda}_S}{\hat{\sigma}_S}=\frac{\underline{\lambda}_B(t)}{\overline{\sigma}_r}-\frac{\underline{\lambda}_S/\overline{\sigma}_S}{\underline{\lambda}_B(t)/\overline{\sigma}_r}\frac{\underline{\lambda}_S}{\overline{\sigma}_S}>0,
\\\frac{\hat{\lambda}_S}{\hat{\sigma}_S}+\hat{\rho}(t)\frac{\hat{\lambda}_B(t)}{\hat{\sigma}_r}=\frac{\underline{\lambda}_S}{\overline{\sigma}_S}-\frac{\underline{\lambda}_S/\overline{\sigma}_S}{\underline{\lambda}_B(t)/\overline{\sigma}_r}\frac{\underline{\lambda}_B(t)}{\overline{\sigma}_r}=0.
\end{gather*}
Therefore, the investor invests in the bond for speculative reasons and in order to hedge interest rate risk but refrains from investing in the stock--the optimal portfolio in this case is a sparse portfolio with no stock investment. However, such a portfolio is by no means extreme, since this case leads to a short position in the stock if there is no ambiguity.
\end{enumerate}
\par The worst-case scenario also shows that the endogenous ambiguity leads to an additional dynamic effect in the optimal investment strategy compared to the case without ambiguity. Without ambiguity, the dynamic behavior of the optimal investment strategy, $\tilde{\pi}$, is represented by the function $b$ which depends on time. More specifically, it depends on the time to the investor's time horizon and the time to the bond's maturity, $T-t$ and $\bar{T}-t$, respectively. Thus, there is no difference between the evolution of time and shortening the time horizon of the investor and the maturity of the bond. The effect prevails in the case with ambiguity, but there is also an additional dynamic effect. The endogenous ambiguity about the risk premium on the bond is time-dependent and enters the optimal investment strategy. It also makes the worst-case correlation time-dependent, which enters the optimal investment strategy. Compared to the dynamic effect from the case without ambiguity, this additional dynamic effect is independent of the investor's time horizon and the bond's maturity. How this changes the behavior of the optimal investment strategy over time is shown by the implementation of the optimal investment strategy in the following section.

\section{Implementation}\label{implementation}
The calibration of the model is similar to the case without ambiguity and incorporates ambiguity by inferring the extreme values for the ambiguous quantities from the extreme values of rolling window estimates. The investor's time horizon is $T=10$ and the maturity of the bond is $\bar{T}=20$, which is a similar choice compared to other papers on optimal investment with stochastic interest rates \citep{bajeux-besnainouetal2001,brennanxia2000,florlarsen2014,sorensen1999}. If there is no ambiguity, i.e., when the investor only considers the (constant) scenario $\tilde{\theta}$ (introduced in the previous section), the parameters of the short-term interest rate, the bond, and the stock can be estimated as outlined in Section \ref{calibration negelcting ambiguity} of the appendix. Using US data ranging from January, 1946, to December, 2020, the estimation then yields $\kappa=0.336$, $\bar{r}=0.0381$, $\sigma_r=0.0262$, $\lambda_0^B=0.0086$, $\lambda_S=0.078$, $\sigma_S=0.1457$, and $\rho=0.0196$, which determine the optimal investment strategy neglecting ambiguity. The estimation taking ambiguity into account is based on the same approach but uses rolling window estimates (with a rolling window of 20 years) for the ambiguous quantities (to observe how they vary over time) and then takes the minimum and the maximum of the rolling window estimates as the extreme values for the ambiguous quantities. The variation of the rolling window estimates over time can be seen in Figure \ref{rolling estimates} and leads to the estimates $\underline{\sigma}_r=0.0111$, $\overline{\sigma}_r=0.0455$, $\underline{\lambda}_S=0.0124$, $\overline{\lambda}_S=0.1254$, $\underline{\sigma}_S=0.1191$, $\overline{\sigma}_S=0.1651$, $\underline{\rho}=-0.1474$, and $\overline{\rho}=0.1337$.
\begin{figure}
\includegraphics[width=\textwidth]{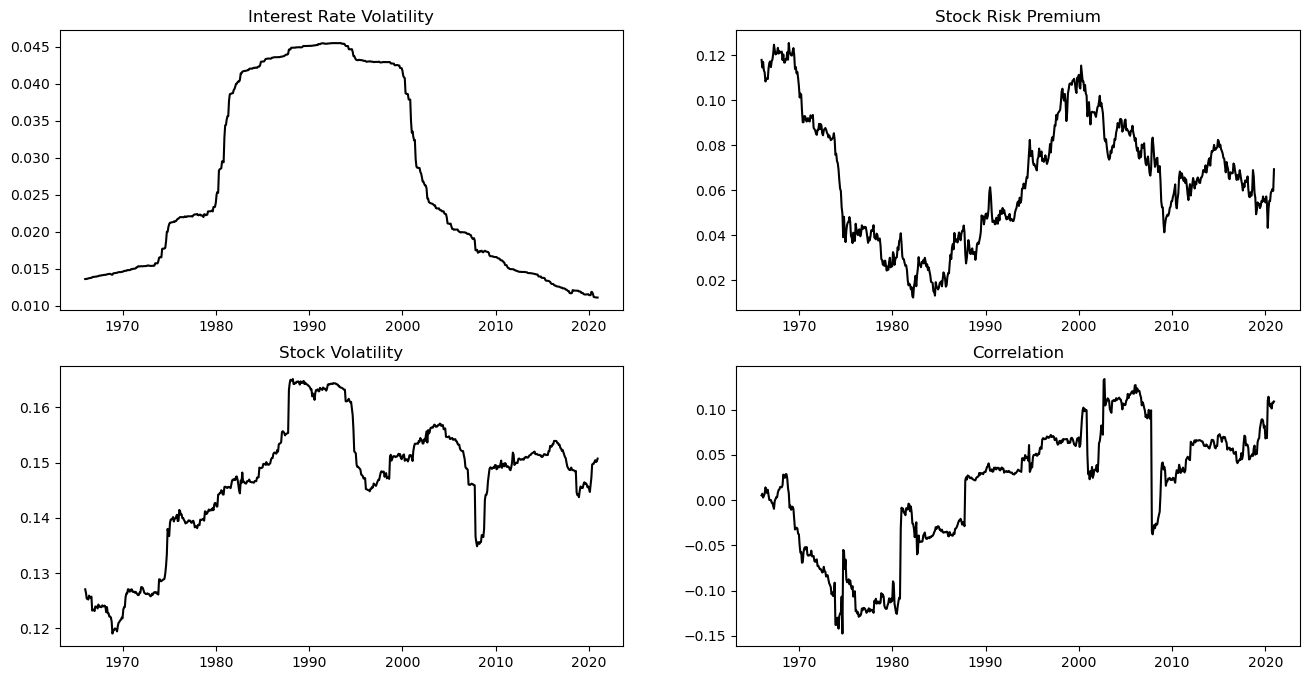}
\caption{Rolling window estimates for the volatility of the short-term interest rate, the risk premium on the stock, the volatility of the stock, and the correlation between the short-term interest rate and the stock.}\label{rolling estimates}
\end{figure}
\par The estimates then enable to compute the optimal investment strategies of an ambiguity averse investor and an investor neglecting ambiguity, respectively. The optimal investment strategy neglecting ambiguity, $\tilde{\pi}$, can be computed using the estimates from the case without ambiguity. In the case with ambiguity, the estimates for the extreme values of the ambiguous quantities determine the worst-case scenario, $\hat{\theta}$, as shown in Theorem \ref{solution to the optimal investment problem}. In particular, the lowest possible correlation, $\underline{\rho}$, and the ratios $-(\underline{\lambda}_B(t)/\overline{\sigma}_r)/(\underline{\lambda}_S/\overline{\sigma}_S)$ and $-(\underline{\lambda}_S/\overline{\sigma}_S)/(\underline{\lambda}_B(t)/\overline{\sigma}_r)$ determine the worst-case scenario for the correlation, which is illustrated by Figure \ref{worst-case correlation} and affects the speculative motives of the investor (as described in the cases $(i)$, $(ii)$, and $(iii)$ in the previous section). From the worst-case scenario, one can then compute the optimal investment strategy with ambiguity aversion, $\hat{\pi}$.
\begin{figure}
\includegraphics[width=\textwidth]{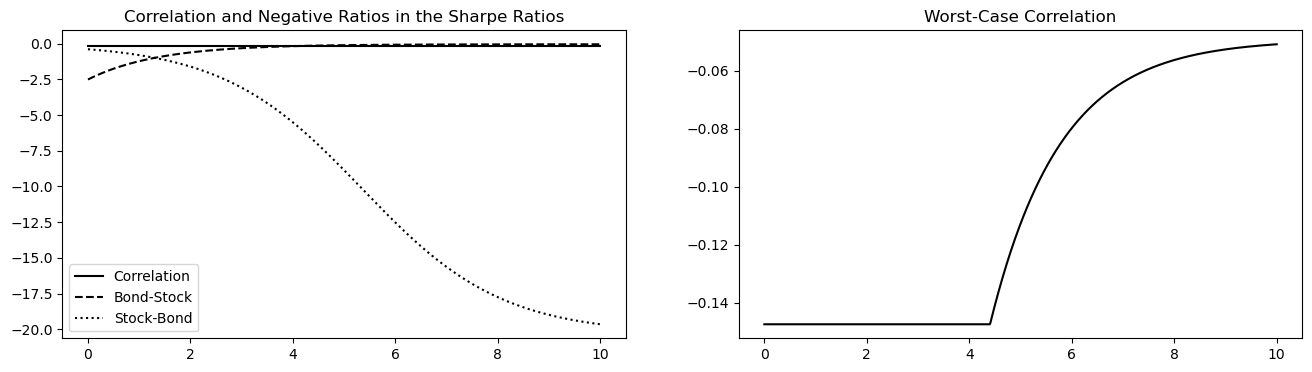}
\caption{The lowest possible correlation, $\underline{\rho}$, the negative ratio in the worst-case Sharpe ratios of the bond and the stock, $\underline{\lambda}_B(t)/\overline{\sigma}_r$ and $\underline{\lambda}_S/\overline{\sigma}_S$, respectively, and its reciprocal (on the left) and the worst-case scenario for the correlation (on the right).}\label{worst-case correlation}
\end{figure}
\par A comparison of the optimal investment strategies at inception shows that ambiguity aversion yields more reasonable positions and preserves the desirable properties of the optimal investment strategy neglecting ambiguity. Figure \ref{weights depending on risk aversion} shows the portfolio weights at time $0$ and the related bond-stock ratio as a function of the investor's level of risk aversion, $\gamma$, corresponding to the optimal investment strategy with ambiguity aversion and the optimal investment strategy neglecting ambiguity. It shows that the ambiguity decreases the bond and the stock weight and increases the cash weight. Thus, ambiguity aversion leads to less leveraged positions compared to the very extreme positions neglecting ambiguity. This is due to the hedge against the ambiguity, which is the distance between the two curves of the risky assets' portfolio weights at $\gamma=1$. The difference in the portfolio weights is greater for low levels of risk aversion, since then there is more investment in the hedge against ambiguity. Since the hedge part in the bond weight is not affected by the ambiguity, the ambiguity generally increases the bond-stock ratio. The effect is more pronounced for high levels of risk aversion. Therefore, the ambiguity amplifies the effect that the bond-stock ratio is increasing in the relative risk aversion, which is an important result in the asset allocation literature (as pointed out in the introduction) and hence robust with respect to ambiguity.
\begin{figure}
\includegraphics[width=\textwidth]{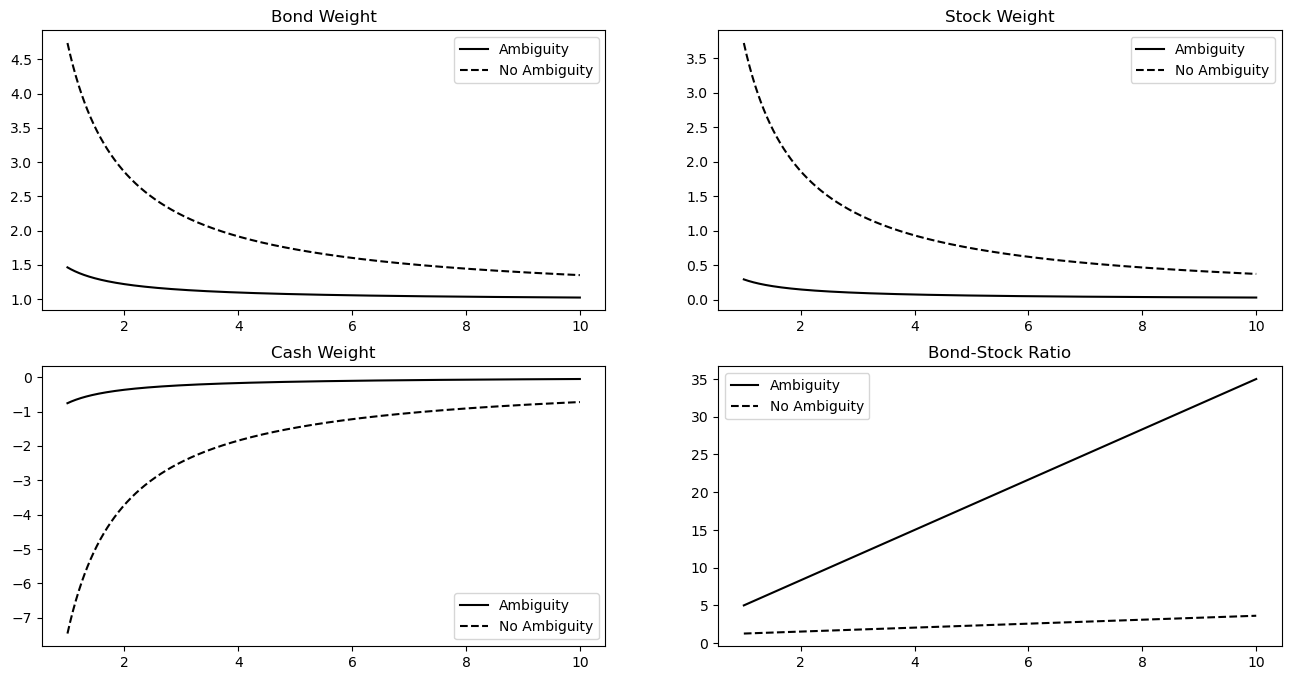}
\caption{The optimal (static) portfolio weights and the bond-stock ratio depending on the relative risk aversion for an ambiguity averse investor and an investor neglecting ambiguity.}\label{weights depending on risk aversion}
\end{figure}
\par While all sources of ambiguity affect the optimal investment strategy, the magnitude of their effects is different. Figure \ref{weights depending on risk aversion with ambiguity} shows the same as Figure \ref{weights depending on risk aversion} but for the optimal investment strategy considering only one source of ambiguity for each of the different sources of ambiguity. It shows that the ambiguity about the volatility of the short-term interest rate and the risk premium on the stock have the largest impact on the optimal investment strategy. The ambiguous volatility of the short-term interest rate decreases the bond weight the most compared to the other sources of ambiguity. The ambiguous risk premium on the stock on the other hand has the largest decreasing effect on the stock weight compared to the other sources of ambiguity and consequently has the largest increasing effect on the bond-stock ratio. The reason is that, according to the estimates, the ambiguity about the volatility of the short-term interest rate is relatively large (i.e., the distance between the extreme values relative to the values is large) compared to the ambiguity about the volatility of the stock and that there is more ambiguity (i.e., a larger distance between the extreme values) about the stock's risk premium compared to its volatility. Despite its magnitude, the ambiguity about the correlation has a relatively modest effect on the optimal investment strategy.
\begin{figure}
\includegraphics[width=\textwidth]{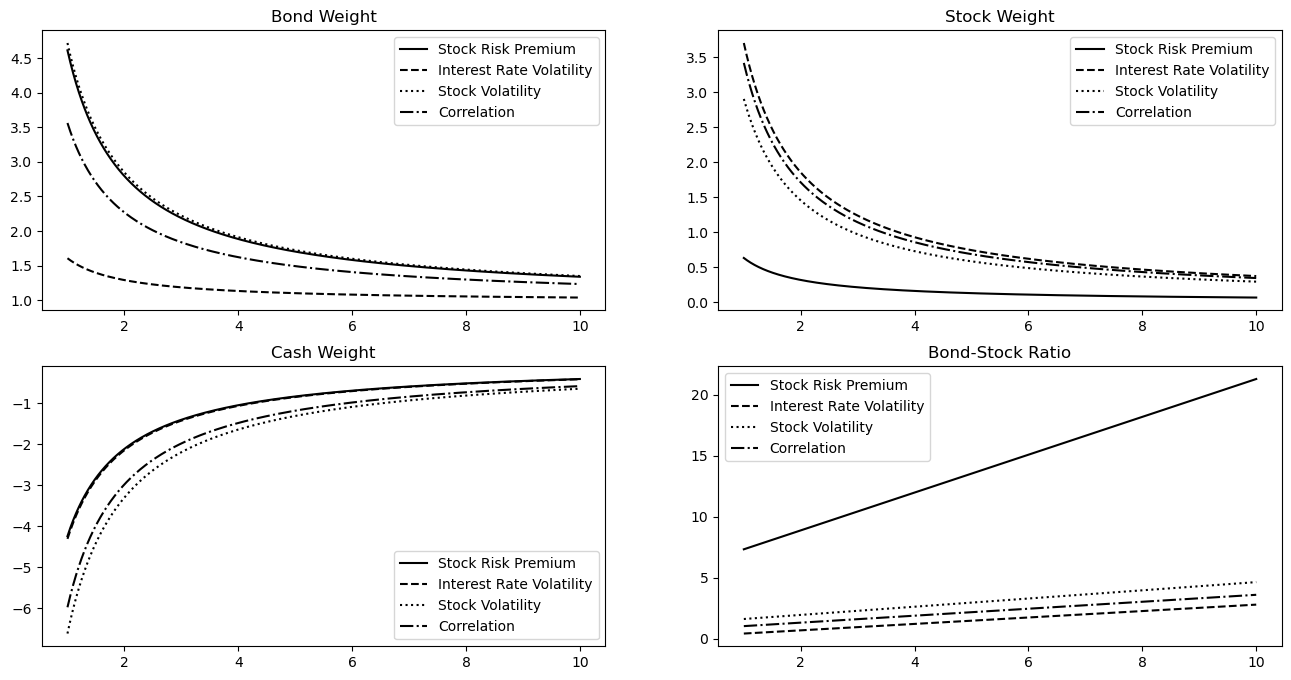}
\caption{The optimal (static) portfolio weights and the bond-stock ratio depending on the relative risk aversion for an ambiguity averse investor considering only one source of ambiguity for each source of ambiguity separately.}\label{weights depending on risk aversion with ambiguity}
\end{figure}
\par In addition to the static effects, ambiguity aversion also leads to a much more natural dynamic behavior. Figure \ref{weights as time evolves} shows the portfolio weights, the speculative part and the hedge part in the bond weight, and the bond-stock ratio as a function of time for $\gamma=2$ corresponding to the optimal investment strategy with ambiguity aversion and the optimal investment strategy neglecting ambiguity. The optimal investment strategy neglecting ambiguity changes only slightly over time (due to the hedge part in the bond weight, which decreases the bond weight and increases the cash weight slightly close to the time horizon) and therefore stays highly leveraged as time passes. The stock weight in the optimal investment strategy with ambiguity aversion increases slightly, as the speculative part shifts from the bond to the stock over time, and then stays constant. The bond weight on the other hand decreases vastly as time passes, since the speculative part decreases (in addition to the hedge part), until the investor stops investing in the bond for speculative reasons, as described in the case $(ii)$ in Section \ref{optimal investment}, which is illustrated by Figure \ref{worst-case correlation}, and invests in the bond only in order to hedge interest rate risk. Hence, the cash weight increases over time such that the investor holds almost all wealth in cash at the time horizon, which is intuitive and resembles popular investment advice in contrast to the optimal investment strategy neglecting ambiguity.
\begin{figure}
\includegraphics[width=\textwidth]{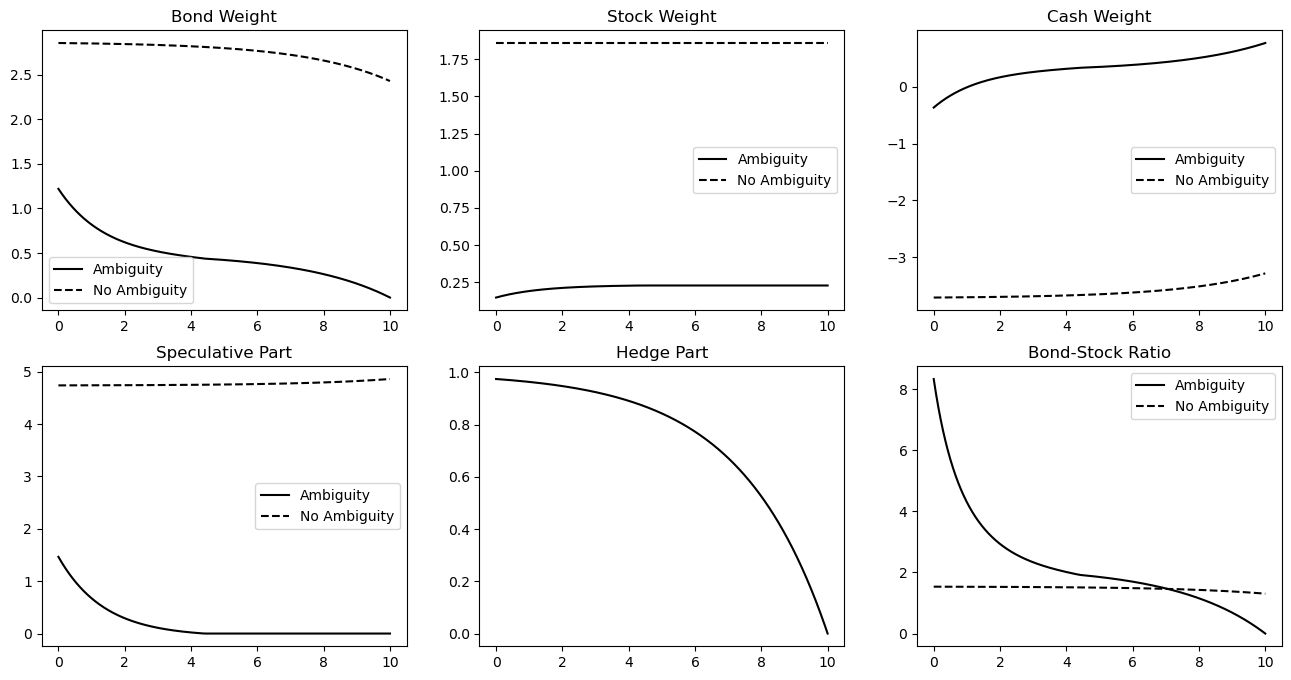}
\caption{The optimal portfolio weights, the speculative part and the hedge part in the bond weight, and the bond-stock ratio depending on time for $\gamma=2$ for an ambiguity averse investor and an investor neglecting ambiguity.}\label{weights as time evolves}
\end{figure}
\par Comparing the different sources of ambiguity shows which source creates the difference in the dynamic behavior of the optimal investment strategy with ambiguity aversion. Figure \ref{weights as time evolves with ambiguity} shows the same as Figure \ref{weights as time evolves} but for the optimal investment strategy considering only one source of ambiguity for each of the different sources of ambiguity (similar to Figure \ref{weights depending on risk aversion with ambiguity} for the static portfolio weights). It shows that the ambiguous volatility of the short-term interest rate is responsible for the change in the dynamic behavior of the optimal investment strategy as it is outlined above. The reason is that the endogenous ambiguity, which is due to the ambiguity about the volatility of the short-term interest rate, causes an additional dynamic effect in the optimal investment strategy, as described at the end of Section \ref{optimal investment}. The other sources of ambiguity in contrast do not change the dynamic behavior; instead, they make the optimal investment strategy less extreme (similar to the static effects seen in Figure \ref{weights depending on risk aversion with ambiguity}).
\begin{figure}
\includegraphics[width=\textwidth]{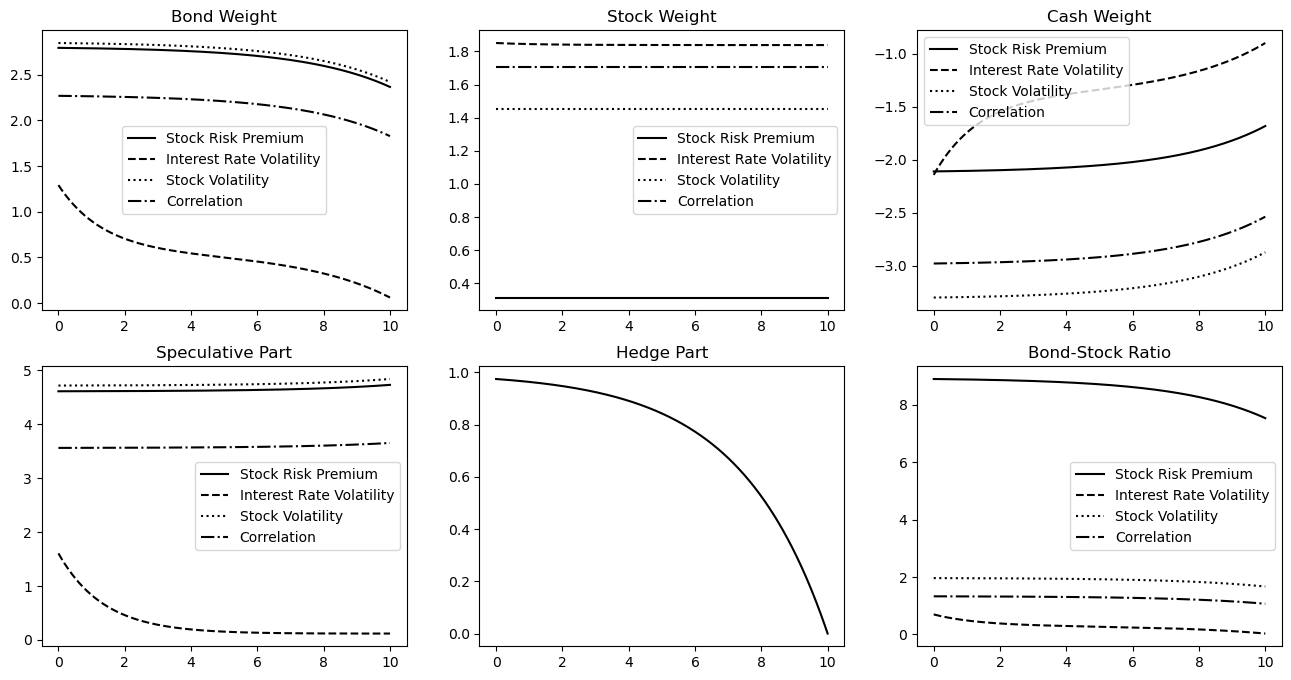}
\caption{The optimal portfolio weights, the speculative part and the hedge part in the bond weight, and the bond-stock ratio depending on time for $\gamma=2$ for an ambiguity averse investor considering only one source of ambiguity for each source of ambiguity separately.}\label{weights as time evolves with ambiguity}
\end{figure}

\section{Conclusion}\label{conclusion}
This paper shows the importance and advantages of considering ambiguity--especially several sources of ambiguity--in dynamic asset allocation problems with bonds, stocks, and cash as the main asset classes. On a theoretical level, the results of the paper show that ambiguity aversion in the sense of maxmin expected utility only affects the speculative motives of the investor, which can be interpreted as a hedge against the ambiguity, but it does not affect the demand for hedging interest rate risk. From a practical perspective, ambiguity aversion leads to investment strategies that are more in line with popular investment advice due to the dampening effect of ambiguity in the speculative and highly leveraged part of the investment strategy.

\appendix

\section*{Appendix}

\section{Proof of Theorem \ref{solution to the optimal investment problem}}\label{proof}
The approach to solving the optimal investment problem is based on the following extension of the martingale optimality principle to a multiple prior setting.
\begin{prp}\label{martingale optimality principle for optimal strategies}
If there exists a function $V\in C^{1,2,2}([0,T)\times\mathbb{R}_+\times\mathbb{R})$, a strategy $\hat{\pi}\in\Pi$, and a scenario $\hat{\theta}\in\Theta$ such that
\begin{enumerate}
\item[$(i)$] it holds $V(T,W,r)=u(W)$ for all $(W,r)\in\mathbb{R}_+\times\mathbb{R}$,
\item[$(ii)$] the process $(V(t,W_t,r_t))_t$ is a $P_{\hat{\pi},\hat{\theta}}$-martingale,
\item[$(iii)$] for each strategy $\pi\in\Pi$, the process $(V(t,W_t,r_t))_t$ is a $P_{\pi,\hat{\theta}}$-supermartingale,
\item[$(iv)$] the prior $P_{\hat{\pi},\hat{\theta}}$ is a worst-case prior for the strategy $\hat{\pi}$,
\end{enumerate}
then $\hat{\pi}$ is an optimal investment strategy and
\begin{align*}
\sup_{\pi\in\Pi}\inf_{P\in\mathcal{P}_\pi}\mathbb{E}_P[u(W_T)]=V(0,W_0,r_0).
\end{align*}
\end{prp}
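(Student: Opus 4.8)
The plan is to prove Proposition \ref{martingale optimality principle for optimal strategies} as a verification argument in which conditions $(i)$--$(iv)$ do essentially all the work: I would establish two matching inequalities for the value $\sup_{\pi\in\Pi}\inf_{P\in\mathcal{P}_\pi}\mathbb{E}_P[u(W_T)]$ and show along the way that $\hat\pi$ attains it. The only real content is to combine the four conditions correctly around the $\sup$--$\inf$ structure, exploiting that the \emph{same} scenario $\hat\theta$ appears in both $(ii)$ and $(iii)$.

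First I would establish the upper bound $\sup_{\pi\in\Pi}\inf_{P\in\mathcal{P}_\pi}\mathbb{E}_P[u(W_T)]\le V(0,W_0,r_0)$. Fix an arbitrary admissible strategy $\pi\in\Pi$. By $(iii)$ the process $(V(t,W_t,r_t))_t$ is a $P_{\pi,\hat\theta}$-supermartingale, so the expectation inequality at the terminal time gives $\mathbb{E}_{P_{\pi,\hat\theta}}[V(T,W_T,r_T)]\le V(0,W_0,r_0)$. Substituting the terminal condition $(i)$, i.e. $V(T,W_T,r_T)=u(W_T)$, yields $\mathbb{E}_{P_{\pi,\hat\theta}}[u(W_T)]\le V(0,W_0,r_0)$. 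Since $\hat\theta\in\Theta$ we have $P_{\pi,\hat\theta}\in\mathcal{P}_\pi$, so the inner infimum is dominated by this single prior, $\inf_{P\in\mathcal{P}_\pi}\mathbb{E}_P[u(W_T)]\le\mathbb{E}_{P_{\pi,\hat\theta}}[u(W_T)]\le V(0,W_0,r_0)$. As $\pi$ was arbitrary, taking the supremum over $\pi\in\Pi$ delivers the upper bound.

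Next I would show that $\hat\pi$ attains $V(0,W_0,r_0)$, which simultaneously produces the matching lower bound. By $(ii)$ the process is a genuine $P_{\hat\pi,\hat\theta}$-martingale, hence $\mathbb{E}_{P_{\hat\pi,\hat\theta}}[V(T,W_T,r_T)]=V(0,W_0,r_0)$, and using $(i)$ again this reads $\mathbb{E}_{P_{\hat\pi,\hat\theta}}[u(W_T)]=V(0,W_0,r_0)$. Condition $(iv)$, in the sense of Definition \ref{worst-case prior}, says precisely that $P_{\hat\pi,\hat\theta}$ realizes the worst case for $\hat\pi$, so $\inf_{P\in\mathcal{P}_{\hat\pi}}\mathbb{E}_P[u(W_T)]=\mathbb{E}_{P_{\hat\pi,\hat\theta}}[u(W_T)]=V(0,W_0,r_0)$. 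Therefore $\sup_{\pi\in\Pi}\inf_{P\in\mathcal{P}_\pi}\mathbb{E}_P[u(W_T)]\ge\inf_{P\in\mathcal{P}_{\hat\pi}}\mathbb{E}_P[u(W_T)]=V(0,W_0,r_0)$. Combined with the upper bound, this forces equality throughout, so $\sup_{\pi\in\Pi}\inf_{P\in\mathcal{P}_\pi}\mathbb{E}_P[u(W_T)]=V(0,W_0,r_0)=\inf_{P\in\mathcal{P}_{\hat\pi}}\mathbb{E}_P[u(W_T)]$, which is exactly the optimality of $\hat\pi$ in the sense of Definition \ref{optimal investment strategy}.

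The step I expect to require the most care is the passage from the (super)martingale property, which is naturally derived on $[0,T)$ where $V$ is only assumed $C^{1,2,2}$, to the expectation inequality and equality \emph{at} the terminal time $T$. Concretely, one has $\mathbb{E}_{P_{\pi,\hat\theta}}[V(t,W_t,r_t)]\le V(0,W_0,r_0)$ for $t<T$ and must let $t\uparrow T$, matching the limit with $\mathbb{E}_{P_{\pi,\hat\theta}}[u(W_T)]$ through continuity and the terminal condition $(i)$. Here the boundedness of admissible strategies in $\Pi$, together with the explicit exponential-times-power form of $V$ and the moment bounds for the wealth process under the Vasicek dynamics, is what supplies the uniform integrability needed for the martingale limit and a Fatou-type argument for the supermartingale limit. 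Everything else is a direct assembly of $(i)$--$(iv)$.
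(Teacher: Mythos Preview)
Your argument is correct and follows essentially the same route as the paper: use $(i)$ and $(iii)$ to bound $\inf_{P\in\mathcal{P}_\pi}\mathbb{E}_P[u(W_T)]$ above by $V(0,W_0,r_0)$ for every $\pi$, and use $(i)$, $(ii)$, $(iv)$ to show that $\hat\pi$ attains this value. The paper presents the two inequalities in the opposite order and does not discuss the passage $t\uparrow T$ that you flag in your last paragraph; that technical remark is a reasonable addition but not part of the paper's proof of this proposition.
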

\begin{proof}
Having a worst-case prior for the strategy $\hat{\pi}$ allows to use the same steps as in the classical case with a single prior. By $(i)$, $(ii)$, and $(iv)$, there exists a strategy $\hat{\pi}\in\Pi$ such that
\begin{align*}
\inf_{P\in\mathcal{P}_{\hat{\pi}}}\mathbb{E}_P[u(W_T)]=\mathbb{E}_{P_{\hat{\pi},\hat{\theta}}}[u(W_T)]=\mathbb{E}_{P_{\hat{\pi},\hat{\theta}}}[V(T,W_T,r_T)]=V(0,W_0,r_0).
\end{align*}
By $(i)$ and $(iii)$, for all $\pi\in\Pi$, it holds
\begin{align*}
\inf_{P\in\mathcal{P}_\pi}\mathbb{E}_P[u(W_T)]\leq\mathbb{E}_{P_{\pi,\hat{\theta}}}[u(W_T)]=\mathbb{E}_{P_{\pi,\hat{\theta}}}[V(T,W_T,r_T)]\leq V(0,W_0,r_0).
\end{align*}
Therefore, the strategy $\hat{\pi}$ is an optimal investment strategy and the value of the optimal investment problem is given by $V(0,W_0,r_0)$.
\end{proof}
\noindent In order to find the worst-case prior in Proposition \ref{martingale optimality principle for optimal strategies}, one can again use the martingale optimality principle.
\begin{prp}\label{martingale optimality principle for worst-case priors}
Let $\hat{\pi}\in\Pi$. If there exists a function $V\in C^{1,2,2}([0,T)\times\mathbb{R}_+\times\mathbb{R})$ and a scenario $\hat{\theta}\in\Theta$ such that
\begin{enumerate}
\item[$(i)$] it holds $V(T,W,r)=u(W)$ for all $(W,r)\in\mathbb{R}_+\times\mathbb{R}$,
\item[$(ii)$] the process $(V(t,W_t,r_t))_t$ is a $P_{\hat{\pi},\hat{\theta}}$-martingale,
\item[$(iii)$] for each scenario $\theta\in\Theta$, the process $(V(t,W_t,r_t))_t$ is a $P_{\hat{\pi},\theta}$-submartingale,
\end{enumerate}
then $P_{\hat{\pi},\hat{\theta}}$ is a worst-case prior for $\hat{\pi}$.
\end{prp}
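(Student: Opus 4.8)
The plan is to mirror the proof of Proposition \ref{martingale optimality principle for optimal strategies}, now using the submartingale property in $(iii)$ to bound expected utility from below across all scenarios, while the martingale property in $(ii)$ pins down the value actually attained at $\hat{\theta}$. The guiding observation is that, by construction, the infimum defining the worst case coincides with an infimum over scenarios: since $\mathcal{P}_{\hat{\pi}}=\{P_{\hat{\pi},\theta}\,\vert\,\theta\in\Theta\}$, we have $\inf_{P\in\mathcal{P}_{\hat{\pi}}}\mathbb{E}_P[u(W_T)]=\inf_{\theta\in\Theta}\mathbb{E}_{P_{\hat{\pi},\theta}}[u(W_T)]$. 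So it suffices to show that this scenario-infimum equals $\mathbb{E}_{P_{\hat{\pi},\hat{\theta}}}[u(W_T)]$.

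First I would combine the terminal condition $(i)$ with the martingale property $(ii)$: evaluating the $P_{\hat{\pi},\hat{\theta}}$-martingale $(V(t,W_t,r_t))_t$ at $t=0$ and $t=T$ gives $\mathbb{E}_{P_{\hat{\pi},\hat{\theta}}}[u(W_T)]=\mathbb{E}_{P_{\hat{\pi},\hat{\theta}}}[V(T,W_T,r_T)]=V(0,W_0,r_0)$, the initial value being deterministic because $(W_0,r_0)$ is fixed. Next I would use $(i)$ together with the submartingale property $(iii)$: for an arbitrary scenario $\theta\in\Theta$, the $P_{\hat{\pi},\theta}$-submartingale $(V(t,W_t,r_t))_t$ satisfies $\mathbb{E}_{P_{\hat{\pi},\theta}}[u(W_T)]=\mathbb{E}_{P_{\hat{\pi},\theta}}[V(T,W_T,r_T)]\geq V(0,W_0,r_0)$.

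Taking the infimum over $\theta\in\Theta$ in the last inequality then yields $\inf_{P\in\mathcal{P}_{\hat{\pi}}}\mathbb{E}_P[u(W_T)]\geq V(0,W_0,r_0)$. On the other hand, since $\hat{\theta}\in\Theta$, the prior $P_{\hat{\pi},\hat{\theta}}$ lies in $\mathcal{P}_{\hat{\pi}}$, so the infimum is at most $\mathbb{E}_{P_{\hat{\pi},\hat{\theta}}}[u(W_T)]=V(0,W_0,r_0)$. Combining the two bounds forces $\inf_{P\in\mathcal{P}_{\hat{\pi}}}\mathbb{E}_P[u(W_T)]=V(0,W_0,r_0)=\mathbb{E}_{P_{\hat{\pi},\hat{\theta}}}[u(W_T)]$, which is exactly the defining property of a worst-case prior in Definition \ref{worst-case prior}.

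Given the hypotheses, the argument is short and structurally identical to the previous proposition, since the substantive content has been absorbed into the assumptions. Accordingly, the real difficulty lies not here but in later verifying $(ii)$ and $(iii)$ for the concrete candidate $V$, $\hat{\pi}$, and $\hat{\theta}$ of Theorem \ref{solution to the optimal investment problem}: one must check, via It\^o's formula, that the drift of $V(t,W_t,r_t)$ under $P_{\hat{\pi},\theta}$ is nonnegative for every admissible $\theta$ and vanishes at $\hat{\theta}$, which is precisely where the worst-case parameters get identified. The only point needing care within this proof itself is the interchange of the infimum over $\mathcal{P}_{\hat{\pi}}$ with the infimum over $\Theta$, immediate from the definition of $\mathcal{P}_{\hat{\pi}}$, together with the feasibility $\hat{\theta}\in\Theta$ that guarantees the lower bound is actually attained.
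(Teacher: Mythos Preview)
Your proof is correct and is precisely the argument the paper has in mind: it merely states that the proof is similar to that of Proposition \ref{martingale optimality principle for optimal strategies}, and your write-up is exactly the natural adaptation, replacing the supermartingale inequality by the submartingale inequality and the supremum over strategies by the infimum over scenarios.
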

\noindent The proof is similar to the proof of Proposition \ref{martingale optimality principle for optimal strategies}.
\par The martingale optimality principle can be applied by examining the drift of a suitable value function. Under a specific prior $P_{\pi,\theta}$ for a strategy $\pi\in\Pi$ and a scenario $\theta\in\Theta$, the investor's wealth and the short-term interest rate evolve as
\begin{align*}
dW_t&=W_t\big((r_t+\pi_t^\top\lambda_t^\theta)dt+\pi_t^\top\sigma_t^\theta dB_t\big),
\\dr_t&=\kappa(\bar{r}-r_t)dt+(\nu_t^\theta)^\top dB_t,
\end{align*}
where $B$ is a standard Brownian motion.
Under the prior $P_{\pi,\theta}$ the dynamics of an arbitrary function $V\in C^{1,2,2}([0,T)\times\mathbb{R}_+\times\mathbb{R})$ are given by
\begin{align*}
dV(t,W_t,r_t)=\Delta_t^{\pi,\theta}dt+\big(V_W(t,W_t,r_t)W_t\pi_t^\top\sigma_t^\theta+V_r(t,W_t,r_t)(\nu_t^\theta)^\top\big)dB_t,
\end{align*}
where the subscripts denote the partial derivatives of $V$ and the drift $\Delta_t^{\pi,\theta}$ is defined as
\begin{align*}
\Delta_t^{\pi,\theta}:={}&V_t(t,W_t,r_t)+V_W(t,W_t,r_t)W_t(r_t+\pi_t^\top\lambda_t^\theta)+V_r(t,W_t,r_t)\kappa(\bar{r}-r_t)
\\&+\frac{1}{2}V_{WW}(t,W_t,r_t)W_t^2\pi_t^\top\sigma_t^\theta(\sigma_t^\theta)^\top\pi_t+V_{Wr}(t,W_t,r_t)W_t\pi_t^\top\sigma_t^\theta\nu_t^\theta
\\&+\frac{1}{2}V_{rr}(t,W_t,r_t)(\nu_t^\theta)^\top\nu_t^\theta.
\end{align*}
Since the drift term of a process determines if it is a martingale, a supermartingale, or a submartingale, the aim is to find a suitably regular function $V\in C^{1,2,2}([0,T)\times\mathbb{R}_+\times\mathbb{R})$ satisfying the terminal condition $(i)$ in Propositions \ref{martingale optimality principle for optimal strategies} and \ref{martingale optimality principle for worst-case priors}, a strategy $\hat{\pi}\in\Pi$, and a scenario $\hat{\theta}\in\Theta$ such that for all $t\in[0,T]$, it holds
\begin{align}
\Delta_t^{\hat{\pi},\hat{\theta}}&=0,\label{martingale condition}
\\\Delta_t^{\pi,\hat{\theta}}&\leq0\quad\text{for all}\ \pi\in\Pi,\label{supermartingale condition}
\\\Delta_t^{\hat{\pi},\theta}&\geq0\quad\text{for all}\ \theta\in\Theta.\label{submartingale condition}
\end{align}
Provided $V$ is sufficiently regular, condition \eqref{martingale condition} implies that $(V(t,W_t,r_t))_t$ is a $P_{\hat{\pi},\hat{\theta}}$-martingale, condition \eqref{supermartingale condition} implies that $(V(t,W_t,r_t))_t$ is a $P_{\pi,\hat{\theta}}$-supermartingale for each strategy $\pi\in\Pi$, and condition \eqref{submartingale condition} implies that $(V(t,W_t,r_t))_t$ is a $P_{\hat{\pi},\theta}$-submartingale for each scenario $\theta\in\Theta$, i.e., all conditions from Propositions \ref{martingale optimality principle for optimal strategies} and \ref{martingale optimality principle for worst-case priors} are satisfied.
\par One can verify that the strategy, the scenario, and the value function from Theorem \ref{solution to the optimal investment problem} satisfy the conditions on the drift of the value function. Let $\hat{\pi}$, $\hat{\theta}$, and $V$ be defined as in Theorem \ref{solution to the optimal investment problem}. The value function $V$ satisfies condition $(i)$ in Propositions \ref{martingale optimality principle for optimal strategies} and \ref{martingale optimality principle for worst-case priors}. For any strategy $\pi\in\Pi$, the drift of the value function under scenario $\hat{\theta}$ is given by
\begin{align*}
\Delta_t^{\pi,\hat{\theta}}=\exp\Big((1-\gamma)\big(a_0(t)+a_1(t)r_t\big)\Big)W_t^{1-\gamma}f(t,r_t,\pi_t),
\end{align*}
where the function $f:[0,T]\times\mathbb{R}^3\rightarrow\mathbb{R}$ is defined by
\begin{align*}
f(t,r,\pi):={}&a_0'(t)-b'(T-t)r+r+\pi^\top\hat{\lambda}(t)+b(T-t)\kappa(\bar{r}-r)-\frac{1}{2}\gamma\pi^\top\hat{\sigma}(t)\hat{\sigma}(t)^\top\pi
\\&+(1-\gamma)b(T-t)\pi^\top\hat{\sigma}(t)\hat{\nu}+\frac{1}{2}(1-\gamma)b(T-t)^2\hat{\nu}^\top\hat{\nu}
\end{align*}
and the risk premia and the diffusion components under scenario $\hat{\theta}$ are denoted by
\begin{align*}
\hat{\lambda}(t):=\lambda_t^{\hat{\theta}},\quad\hat{\sigma}(t):=\sigma_t^{\hat{\theta}},\quad\hat{\nu}:=\nu_t^{\hat{\theta}}.
\end{align*}
Since the strategy $\hat{\pi}$ can be expressed as
\begin{align*}
\hat{\pi}_t=\frac{1}{\gamma}\big(\hat{\sigma}(t)\hat{\sigma}(t)^\top\big)^{-1}\hat{\lambda}(t)+\frac{1-\gamma}{\gamma}b(T-t)\big(\hat{\sigma}(t)^\top\big)^{-1}\hat{\nu},
\end{align*}
one can check that it maximizes the function $f$, that is,
\begin{align*}
\hat{\pi}_t\in\underset{\pi\in\mathbb{R}^2}{\arg\max}\,f(t,r,\pi)\quad\text{for all}\ (t,r)\in[0,T]\times\mathbb{R}.
\end{align*}
Moreover, the function $a_0$ can be expressed as
\begin{align*}
a_0(t)={}&\frac{1}{2}\frac{1}{\gamma}\int_t^T\hat{\lambda}(u)^\top\big(\hat{\sigma}(u)\hat{\sigma}(u)^\top\big)^{-1}\hat{\lambda}(u)du+\kappa\bar{r}\int_t^Tb(T-u)du
\\&+\frac{1-\gamma}{\gamma}\int_t^T\hat{\nu}^\top\hat{\sigma}(u)^{-1}\hat{\lambda}(u)b(T-u)du+\frac{1}{2}\frac{1-\gamma}{\gamma}\hat{\nu}^\top\hat{\nu}\int_t^Tb(T-u)^2du.
\end{align*}
Inserting the expressions for $\hat{\pi}$ and $a_0$ into $f$ yields
\begin{align*}
f(t,r,\hat{\pi}_t)=0\quad\text{for all}\ (t,r)\in[0,T]\times\mathbb{R}.
\end{align*}
Since $\hat{\pi}$ also maximizes $f$, conditions \eqref{martingale condition} and \eqref{supermartingale condition} hold. Apart from that, for any scenario $\theta\in\Theta$, it holds
\begin{align*}
\Delta_t^{\hat{\pi},\theta}=\exp\Big((1-\gamma)\big(a_0(t)+a_1(t)r_t\big)\Big)W_t^{1-\gamma}g(t,r_t,\theta_t),
\end{align*}
where $g:[0,T]\times\mathbb{R}^6\rightarrow\mathbb{R}$ is defined by
\begin{align*}
g(t,r,\theta)={}&a_0'(t)-b'(T-t)r+r+b(T-t)\kappa(\bar{r}-r)
\\&+\biggl(\frac{1}{\gamma}\frac{1}{1-\hat{\rho}(t)^2}\frac{1}{\hat{\sigma}_r}\biggl(\frac{\hat{\lambda}_B(t)}{\hat{\sigma}_r}+\hat{\rho}(t)\frac{\hat{\lambda}_S}{\hat{\sigma}_S}\biggr)+\frac{\gamma-1}{\gamma}b(T-t)\biggr)\lambda^B
\\&+\frac{1}{\gamma}\frac{1}{1-\hat{\rho}(t)^2}\frac{1}{\hat{\sigma}_S}\biggl(\frac{\hat{\lambda}_S}{\hat{\sigma}_S}+\hat{\rho}(t)\frac{\hat{\lambda}_B(t)}{\hat{\sigma}_r}\biggr)\lambda^S
\\&-\biggl(\frac{1}{2}\biggl(\frac{1}{\gamma}\frac{1}{1-\hat{\rho}(t)^2}\frac{1}{\hat{\sigma}_r}\biggl(\frac{\hat{\lambda}_B(t)}{\hat{\sigma}_r}+\hat{\rho}(t)\frac{\hat{\lambda}_S}{\hat{\sigma}_S}\biggr)\biggr)^2+\frac{1}{2}\frac{\gamma-1}{\gamma}b(T-t)^2\biggr)(\sigma^r)^2
\\&+\frac{1}{\gamma}\frac{1}{1-\hat{\rho}(t)^2}\frac{1}{\hat{\sigma}_r}\biggl(\frac{\hat{\lambda}_B(t)}{\hat{\sigma}_r}+\hat{\rho}(t)\frac{\hat{\lambda}_S}{\hat{\sigma}_S}\biggr)\frac{1}{1-\hat{\rho}(t)^2}\frac{1}{\hat{\sigma}_S}\biggl(\frac{\hat{\lambda}_S}{\hat{\sigma}_S}+\hat{\rho}(t)\frac{\hat{\lambda}_B(t)}{\hat{\sigma}_r}\biggr)\sigma^r\sigma^S\rho
\\&-\frac{1}{2}\frac{1}{\gamma}\biggl(\frac{1}{1-\hat{\rho}(t)^2}\frac{1}{\hat{\sigma}_S}\biggl(\frac{\hat{\lambda}_S}{\hat{\sigma}_S}+\hat{\rho}(t)\frac{\hat{\lambda}_B(t)}{\hat{\sigma}_r}\biggr)\biggr)^2(\sigma^S)^2.
\end{align*}
Distinguishing between the cases $(i)$, $(ii)$, and $(iii)$ from Section \ref{optimal investment}, one can see that the scenario $\hat{\theta}$ minimizes $g$, that is,
\begin{align*}
\hat{\theta}_t\in\underset{\theta\in S(t)}{\arg\min}\,g(t,r,\theta)\quad\text{for all}\ (t,r)\in[0,T]\times\mathbb{R},
\end{align*}
where $S(t):=[\underline{\lambda}_B(t),\overline{\lambda}_B(t)]\times[\underline{\lambda}_S,\overline{\lambda}_S]\times[\underline{\sigma}_r,\overline{\sigma}_r]\times[\underline{\sigma}_S,\overline{\sigma}_S]\times[\underline{\rho},\overline{\rho}]$. Since it also holds
\begin{align*}
g(t,r,\hat{\theta}_t)=f(t,r,\hat{\pi}_t)=0\quad\text{for all}\ (t,r)\in[0,T]\times\mathbb{R},
\end{align*}
condition \eqref{submartingale condition} is also satisfied.
\par It is left to check that the strategy is an admissible investment strategy and that the value function is sufficiently regular to apply the martingale optimality principle. The strategy $\hat{\pi}$ is deterministic and bounded and thus admissible, i.e., $\hat{\pi}\in\Pi$. Since all strategies and scenarios are bounded, one can show that for each $\pi\in\Pi$ and $\theta\in\Theta$,
\begin{align*}
\mathbb{E}\biggl[\int_0^T\Big\Vert V_W(t,W_t,r_t)W_t\pi_t^\top\sigma_t^\theta+V_r(t,W_t,r_t)(\nu_t^\theta)^\top\Big\Vert^2dt\biggr]<\infty,
\end{align*}
which implies that the diffusion part in the dynamics of the value function is sufficiently regular. Therefore, $(V(t,W_t,r_t))_t$ is a $P_{\hat{\pi},\hat{\theta}}$-martingale, $(V(t,W_t,r_t))_t$ is a $P_{\pi,\hat{\theta}}$-supermartingale for all $\pi\in\Pi$, and $(V(t,W_t,r_t))_t$ is a $P_{\hat{\pi},\theta}$-submartingale for all $\theta\in\Theta$ (using the properties of the drift of the value function from above). This completes the proof of Theorem \ref{solution to the optimal investment problem}.

\section{Calibration Neglecting Ambiguity}\label{calibration negelcting ambiguity}
The estimation of the parameters without considering ambiguity is based on a discrete-time approximation of the continuous-time dynamics, similar to the approach of \citet{chanetal1992}, which is the choice of other papers on optimal investment with stochastic interest rates \citep{bajeux-besnainouetal2001,munksorensen2004,sorensen1999}. The discrete-time dynamics of the short-term interest rate, the bond, and the stock, considering only the reference scenario, $\tilde{\theta}$, are
\begin{align*}
r_{t_{i+1}}-r_{t_i}&=\kappa(\bar{r}-r_{t_i})\delta+\sigma_r(B_{t_{i+1}}^1-B_{t_i}^1),
\\\frac{P_{t_{i+1}}^B-P_{t_i}^B}{P_{t_i}^B}-r_{t_i}\delta&=b(\bar{T})\lambda_0^B\delta-b(\bar{T})\sigma_r(B_{t_{i+1}}^1-B_{t_i}^1),
\\\frac{P_{t_{i+1}}^S-P_{t_i}^S}{P_{t_i}^S}-r_{t_i}\delta&=\lambda_S\delta+\sigma_S\rho(B_{t_{i+1}}^1-B_{t_i}^1)+\sigma_S\sqrt{1-\rho^2}(B_{t_{i+1}}^2-B_{t_i}^2)
\end{align*}
for a partition $t_1<...<t_n$ such that $t_{i+1}-t_i=\delta$ for $i=1,...,n-1$. (The dynamics of the bond involve $b(\bar{T})$ instead of $b(\bar{T}-t_i)$, since the data on the bond corresponds to a constant maturity bond.) The actual estimation uses the following representation:
\begin{align*}
\Delta_i^0&=\alpha+\beta R_i^0+\varepsilon_i^1,
\\\Delta_i^B&=\mu_B-b(\bar{T})\varepsilon_i^1,
\\\Delta_i^S&=\mu_S+\varepsilon_i^2,
\end{align*}
where $\Delta_i^0:=r_{t_{i+1}}-r_{t_i}$ is the change in the annualized short-term interest rate over one time step, $R_i^0:=r_{t_i}\delta$ is the risk-free return over one time step (of length $\delta$), $\Delta_i^B:=(P_{t_{i+1}}^B-P_{t_i}^B)/P_{t_i}^B-r_{t_i}\delta$ is the excess return on the bond over the risk-free return over one time step, $\Delta_i^S:=(P_{t_{i+1}}^S-P_{t_i}^S)/P_{t_i}^S-r_{t_i}\delta$ is the excess return on the stock over the risk-free return over one time step, the parameters are defined by $\alpha:=\kappa\bar{r}\delta$, $\beta:=-\kappa$, $\mu_B:=b(\bar{T})\lambda_0^B\delta$, and $\mu_S:=\lambda_S\delta$, and $(\varepsilon_i^1,\varepsilon_i^2)^\top$ for $i=1,...,n-1$ are independent and identically distributed random variables such that
\begin{align*}
\begin{pmatrix}\varepsilon_i^1\\\varepsilon_i^2\end{pmatrix}\sim N(0,\Sigma),\quad\Sigma=\begin{pmatrix}\sigma_r^2\delta&\sigma_r\sigma_S\rho\delta\\\sigma_r\sigma_S\rho\delta&\sigma_S^2\delta\end{pmatrix}.
\end{align*}
\par The estimation uses US data. The data is taken from the Center for Research in Security Prices, LLC (CRSP) and consist of monthly time series ranging from January, 1946, to December, 2020, that is, $\delta=1/12$ and $n=900$. The observations of the return on the risk-free asset, $R_i^0$ for $i=1,...,n$, are returns on 30-day US Treasury bills, which also yield the observations of $\Delta_i^0$ for $i=1,...,n-1$ by taking the difference in the annualized return over each time step. The observations of the excess return on the bond, $\Delta_i^B$ for $i=1,...,n-1$, are the differences in the returns on a 20-year constant maturity US Treasury bond and the returns on 30-day US Treasury bills. The observations of the excess return on the stock, $\Delta_i^S$ for $i=1,...,n-1$, are the differences in returns on the S\&P 500 and the returns on 30-day US Treasury bills.
\par The parameters can then be obtained by using standard estimation techniques. A simple linear regression yields estimates for $\alpha$ and $\beta$, which in turn yield estimates for $\kappa$ and $\bar{r}$, and from the standard deviation of the residuals, one can recover $\sigma_r$. The means of the excess returns on the bond and the stock provide estimates for $\lambda_0^B$ and $\lambda_S$, respectively. Moreover, the standard deviation of the excess returns on the stock and their correlation with the residuals provide estimates for the volatility of the stock, $\sigma_S$, and its correlation with the dynamics of the short-term interest rate, $\rho$, respectively.

\bibliography{C:/Users/juho/Documents/Literature/Literature}
\bibliographystyle{chicago}

\end{document}